\documentclass[1p,final]{elsarticle}
\usepackage{amsfonts,natbib,pslatex}
\usepackage{latexsym,amsmath,amsthm,amssymb,amsxtra,mathrsfs,times,CJK}
\usepackage{makecell}
\usepackage{color}
\usepackage{amsfonts,color}
\usepackage{amssymb,amsmath,latexsym}
\usepackage{amssymb}
\usepackage{extarrows}
\biboptions{sort&compress}

\usepackage[para]{threeparttable}

\usepackage{tikz}

\usepackage[colorlinks,linkcolor=blue,anchorcolor=blue,citecolor=blue]{hyperref}

\usepackage[para]{threeparttable}


\newtheorem{theorem}{Theorem}[section]
\newtheorem{lemma}[theorem]{Lemma}

\newcommand{\f}{{\mathbb{F}}}

\newcommand{\F}{{\mathcal{F}}}
\newcommand{\G}{{\mathcal{G}}}
\newcommand{\C}{{\mathcal{C}}}
\newcommand{\N}{{\mathrm{Null}}}
\journal{Finite Fields and Their Applications}
\allowdisplaybreaks[1]
\begin{document}

	\begin{frontmatter}
		
		
		
		\title{On the parameters of extended primitive cyclic codes and the related designs}

		\author[SWJTU]{Haode Yan\corref{cor1}}
		\ead{hdyan@swjtu.edu.cn}
		
			\author[SWJTU]{Yanan Yin}
		\ead{yinyn5050@163.com}

		\cortext[cor1]{Corresponding author}
		\address[SWJTU]{School of Mathematics, Southwest Jiaotong University, Chengdu, 610031, China}

		\begin{abstract}Very recently, Heng et al. studied a family of extended primitive cyclic codes. It was shown that the supports of all codewords with any fixed nonzero Hamming weight of this code supporting $2$-designs. In this paper, we study this family of extended primitive cyclic codes in more details. The weight distribution is determined. The parameters of the related $2$-designs are also given. Moreover, we prove that the codewords with minimum Hamming weight supporting $3$-designs, which gives an affirmative solution to Heng's conjecture.
		
		\end{abstract}

		\begin{keyword}
			Linear code\sep Cyclic code \sep Extended primitive cyclic code \sep $t$-design	\sep Linearized polynomial
			\MSC  94B05 \sep 94A05

		\end{keyword}
		
	\end{frontmatter}

\section{Introduction}
 Let $q$ be a prime power, and $\f_q$ be the finite field with $q$ elements. A $q$-ary linear $[n,k,d]$ code $\mathcal{C}$ is a $k$-dimension linear subspace over $\f_q$ with  the minimum distance $d$. For a linear $[n,k]$ code $\mathcal{C}$, any $k\times n$ matrix $G$ whose rows form a basis for $\mathcal{C}$ is called a generator matrix of $\mathcal{C}$. A $q$-ary linear code $\mathcal{C}$ of length $n$ is called cyclic if $(c_0,c_1,\cdots,c_{n-1}) \in \mathcal{C}$ implies $(c_{n-1},c_0,\cdots,c_{n-2})\in \mathcal{C}$. When the length of a $q$-ary cyclic code is $q^m-1$ for some positive ingeter $m$, the cyclic code is called primitive. Let $A_i$ $(i=1,2,\cdots,n)$ denote the number of codewords with Hamming weight $i$ in $\mathcal{C}$. The weight enumerator of $\mathcal{C}$ is defined by $1+A_1z+A_2z^2+\dots+A_nz^n$ and the sequence $(1,A_1,A_2,\cdots,A_n)$ is referred to as the weight distribution of $\mathcal{C}$. The weight distribution of $\mathcal{C}$ shows the minimum distance and the error
  correcting capability of $\mathcal{C}$. To determine the weight distribution of a linear code has always been a hot topic in recent years \cite{Ding2016,Dinh2015,Luo2008,Feng2007,Heng2016,Li2014,Xiong2016,Zhou2014,Zeng2010}.
 
 Let $\mathcal{P}$ be a set with $n$ elements, and let $\mathcal{B}$ be a set of $k$-subsets of $\mathcal{P}$, where $n$ and $k$ are positive integers with $1\leq k\leq n$. Let $t$ be a positive integer with $t \leq k$. A pair $\mathbb{D}=(\mathcal{P},\mathcal{B})$ is called a $t$-$(n,k,\lambda)$ design, or simply $t$-design, which refers to every $t$-subset of $\mathcal{P}$ is contained in exactly $\lambda$ elements of $\mathcal{B}$. The elements of $\mathcal{P}$ are called points, and those of $\mathcal{B}$ are referred to as blocks. A $t$-design is called simple if there are no repeated blocks in $\mathcal{B}$. According to \cite{PLess2003}, the complementary design of $(\mathcal{P},\mathcal{B})$ is the design $(\mathcal{P},\mathcal{B}^{\prime})$, where $\mathcal{B}^{\prime}$ consists of the complements of the blocks in $\mathcal{B}$ and  $(\mathcal{P},\mathcal{B}^{\prime})$ is in fact a $t$-design. It is well known that if the pair $(\mathcal{P},\mathcal{B})$ is a simple $t$-$(n,k,\lambda)$ design, we will obtain the following relation:
 \begin{equation}
 	\label{eqn-1}\binom{n}{t}\lambda=\binom{k}{t}|\mathcal{B}|,
 \end{equation}
where $|\mathcal{B}|$ denotes the number of blocks in $\mathcal{B}$.

Linear codes and $t$-designs are companions. For a linear code $\mathcal{C}$ of length $n$, let $\mathcal{P}=\{1,2,\cdots,n\}$ as the set of coordinate. For any codeword $\mathbf{c}=(c_1,c_2,\cdots,c_n) \in \mathcal{C}$, the support of $\mathbf{c}$ is defined as supp($\mathbf{c}$)=$\{1\leq i\leq n:c_i\neq 0\}$. Let $\mathcal{B}_k$ denoted the set of supports of all codewords with Hamming weight $k$ in $\mathcal{C}$. If the pair $(\mathcal{P},\mathcal{B}_k)$ is a $t$-$(n,k,\lambda)$ design for some positive integers $t$ and  $\lambda$, we say that the supports of codewords with Hamming weight $k$ in $\mathcal{C}$ hold or support a $t$-$(n,k,\lambda)$ design. Using linear codes to construct $t$-designs has been documented in many literatures \cite{Ding2018,Ding2018-2,Ding2019,Ding2015,Ding2017,Ding2020,Ding2021,Du2020,Heng2020,Heng2023,Liu2022,Liu2021,Tang2021,Tang2019,Wang2023,Xiang2022,Xu2022,Xiang2020,Xiang2022-2,Yan2022}.

In \cite{Heng2023}, Heng et al. constructed a family of
extended primitive cyclic codes. The parameters of this family of codes and the related $t$-designs are also studied. More precisely, it was proved that the codewords with any fixed nonzero weight in this code hold $2$-designs. In this paper, we study this family of cyclic codes and the related $t$-designs in more details. We determine the weight distribution of the code and investigate parameters of the $2$-designs. Besides, we prove that the minimum weight codewords of this code support a simple $3$-design when $p=2$, which confirms a conjecture proposed in \cite{Heng2023}.
 
The rest of this paper is organized as follows. In Section \ref{sec2}, we introduce the notation and present some auxiliary results which will be used later. The parameters of the codes and the related designs are given in Section \ref{sec3} and Section \ref{sec4}, respectively. Section \ref{sec5} concludes this paper.
\section{Preliminaries}
\label{sec2}
We begin this section by fixing some notation which will be used throughout this paper unless otherwise stated. Then we introduce some basic results, which will be used in the sequel.
\subsection{Notation}

\begin{itemize}
	\item $h$ and $m$ are positive integers with $h<m$.
	\item $q=p^m$, where $p$ is a prime. $\f_q$ is the finite field with $q$ elements. Sometimes we consider $\f_q$ as an $m$-dimensional vector space over $\f_p$.
	\item $|S|$ denotes the cardinality of the set $S$. $S^c$ is the complementary of $S$. $S+a=\{s+a|s\in S\}$.
	\item $D_h$ is a $(h+2)\times q$ matrix over $\f_q$, which is defined in (\ref{matrix1}). $\mathcal{C}_{D_h}$ is the linear code generated by $D_h$.
	\item For a codeword $\mathbf c\in\mathcal{C}_{D_h}$, $\mathrm{wt}(\mathbf c)$ denotes the Hamming weight of $\mathbf c$. $A_k (0\leq k \leq q)$ is the number of codewords with Hamming weight $k$ in $\mathcal{C}_{D_h}$.
	\item $\mathrm{supp}(\mathbf c)$ is the support of $\mathbf c$, which is a subset of $\{1,2,\cdots,q\}$. In this paper, by the construction of $\mathcal{C}_{D_h}$, there is a bijection $\tau$ between $\{1,2,\cdots,q\}$ and $\f_q$. Namely, $\tau(i)=\alpha^i$ if $1\leq i \leq q-1$ and $\tau(q)=0$, where $\alpha$ is a fixed primitive element in $\f_q$. Sometimes we do not  distinguish the subset of $\{1,2,\cdots,q\}$ and the corresponding subset of $\f_q$.
	\item $\binom{n}{k}_p$ is the Gaussian binomial coefficient and $\binom{n}{k}$ is the usual binomial coefficient.
	\item $U$ and $V$ are vector space over $\f_p$. $V\leq U$ if $V$ is a subspace of $U$.
\end{itemize}

\subsection{Linearized polynomial}
Define  \begin{equation}\label{f}f_{c,\vec{a}}(x)=c+a_0x+a_1x^p+a_2x^{p^2}+\cdots+a_hx^{p^h} \end{equation}
be the polynomial over $\f_q$, where $\vec{a}=(a_0,a_1,\cdots,a_h)\in\f_q^{h+1}$ and $c\in\f_q$. Define
\[\N(f_{c,\vec{a}})=\{x\in\f_q|f_{c,\vec{a}}(x)=0\}.\]
When $c=0$, $f_{(0,\vec{a})}$ is a linearized polynomial and $\N(f_{0,\vec{a}})$ is a vector space over $\f_p$. Moreover, $|\N(f_{c,\vec{a}})|=|\N(f_{0,\vec{a}})|$ or $0$. When $|\N(f_{c,\vec{a}})|=|\N(f_{0,\vec{a}})|$, then $\N(f_{c,\vec{a}})=\N(f_{0,\vec{a}})+x_0$, where $c=-f_{0,\vec{a}}(x_0)$ for some $x_0\in\f_q$. 

\subsection{Gaussian binomial coefficients}
We introduce the Gaussian binomial coefficients. For integers $n$ and $k$, the Gaussian binomial coefficient is defined by
$$
\binom{n}{k}_p=\begin{cases}\frac{(p^n-1)(p^{n-1}-1)\cdots(p^{n-k+1}-1)}{(p^k-1)(p^{k-1}-1)\cdots(p-1)},& \text { if } 1 \leq k \leq n ,\\ 1,& \text { if } k=0 ,\\ 0, & \text{ if } k<0 \text{ or } k>n.\end{cases}$$

It is well-known that the number of distinct $k$-dimensional vector subspaces of a given $n$-dimensional vector space over $\f_p$ is  $\binom{n}{k}_p$, and the number of $k$-dimensional vector subspaces of an $n$-dimensional vector space over $\f_p$ containing a fixed $r$-dimensional vector subspace is  $\binom{n-r}{k-r}_p$.

\subsection{$p$-analog M\"{o}bius inversion formula}
	We need the $p$-analog M\"{o}bius inversion formula. Consider $\f_q$ as an $m$-dimensional vector space over $\f_p$, where $q=p^m$. Let $L_m$ denote the posets of all subspaces of $\f_q$, ordered by inclusion. Namely, $V\leq U$ in $L_m$ if $V$ is a subspace of $U$. We have the following lemma.
 \begin{lemma}(\cite{Stanley1997})
 \label{Lemma2.1}
		Let $\mathcal{F}$ and $\mathcal{G}$ be functions defined on $L_m$. Then
		\[\G(U)=\sum_{V\leq U}\F(V)\]
		if and only if 
		\[\F(U)=\sum_{V\leq U}(-1)^{\mathrm{dim} U-\mathrm{dim} V}p^{\binom{\mathrm{dim} U-\mathrm{dim} V}{2}}\G(V),\]
	where $\mathrm{dim} U$ and $\mathrm{dim} V$ denotes the dimensions of $U$ and $V$ over $\f_p$, respectively.
	\end{lemma}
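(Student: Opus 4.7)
The plan is to recognize Lemma~2.1 as a direct consequence of the general M\"obius inversion theorem for locally finite posets applied to the subspace lattice $L_m$, once the M\"obius function of $L_m$ is identified. Specifically, the general theorem says that for any function $\mu$ on the pairs of $L_m$ satisfying $\mu(V,V)=1$ and $\sum_{V\leq W\leq U}\mu(V,W)=0$ whenever $V<U$, the relation $\G(U)=\sum_{V\leq U}\F(V)$ is equivalent to $\F(U)=\sum_{V\leq U}\mu(V,U)\G(V)$. Thus the entire problem reduces to showing that the M\"obius function of $L_m$ satisfies $\mu(V,U)=(-1)^d p^{\binom{d}{2}}$, where $d=\dim U-\dim V$.

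First I would set up an induction on $d$. The base case $d=0$ is immediate from $\mu(V,V)=1$. For the inductive step with $d\geq 1$, using $\sum_{V\leq W\leq U}\mu(V,W)=0$ and grouping by $k=\dim W-\dim V$, one uses the fact recalled in Section~2.3 that the number of $W$ with $V\leq W\leq U$ and $\dim W=\dim V+k$ is exactly $\binom{d}{k}_p$, since such $W$ are in bijection with the $k$-dimensional subspaces of the $d$-dimensional space $U/V$. Applying the inductive hypothesis then rewrites the recurrence as
\[\mu(V,U)=-\sum_{k=0}^{d-1}\binom{d}{k}_p(-1)^k p^{\binom{k}{2}}.\]

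The remaining step is a $p$-binomial identity: the $p$-analog of the binomial theorem gives
\[\prod_{i=0}^{d-1}(1-p^i z)=\sum_{k=0}^{d}\binom{d}{k}_p(-z)^k p^{\binom{k}{2}},\]
and evaluating at $z=1$ annihilates the left-hand side (through the $i=0$ factor) whenever $d\geq 1$. Hence $\sum_{k=0}^{d}\binom{d}{k}_p(-1)^k p^{\binom{k}{2}}=0$, and isolating the $k=d$ term yields $\mu(V,U)=(-1)^d p^{\binom{d}{2}}$, closing the induction. The main obstacle is really just the bookkeeping to get the signs and $p^{\binom{k}{2}}$-factors aligned; the M\"obius inversion principle itself requires nothing further once $\mu$ is known. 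Substituting this value of $\mu$ into the general M\"obius inversion equivalence then yields both directions of Lemma~2.1.
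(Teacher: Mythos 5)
The paper does not prove Lemma~\ref{Lemma2.1} at all; it is quoted from Stanley's \emph{Enumerative Combinatorics} without proof. Your argument is correct and is essentially the standard proof found there: reduce to the general M\"obius inversion principle for locally finite posets, then compute the M\"obius function of the subspace lattice by induction on $d=\dim U-\dim V$, using the count $\binom{d}{k}_p$ of intermediate subspaces (via the quotient $U/V$) and the $p$-binomial theorem $\prod_{i=0}^{d-1}(1-p^iz)=\sum_{k=0}^{d}\binom{d}{k}_p(-1)^kp^{\binom{k}{2}}z^k$ evaluated at $z=1$ to close the recurrence. All steps check out, including the sign and $p^{\binom{k}{2}}$ bookkeeping.
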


\section{Wight distribution of the extend cyclic code}
\label{sec3}
Let $h$ and $m$ be positive integers with $h<m$ and $q=p^m$ with $p$ a prime. Let $\alpha$ be a fixed primitive element of $\f_q$ and $\alpha_i:=\alpha^i$ for $1\leq i \leq q-1$. Define 	

\begin{equation}
\label{matrix1}
D_h=\left[\begin{matrix}
	1 & 1 & \cdots & 1 & 1 \\
	\alpha_1 & \alpha_2 & \cdots & \alpha_{q-1} & 0 \\
	\alpha_1^p & \alpha_2^p & \cdots & \alpha_{q-1}^p & 0 \\
	\alpha_1^{p^2} & \alpha_2^{p^2} & \cdots & \alpha_{q-1}^{p^2} & 0 \\
	\vdots & \vdots & \ddots & \vdots & \vdots \\
	\alpha_1^{p^h} & \alpha_2^{p^h} & \cdots & \alpha_{q-1}^{p^h} & 0
\end{matrix}\right]
\end{equation}

\noindent Then $D_h$ is an $h+2$ by $q$ matrix. Let $\mathcal{C}_{D_h}$ be the linear code generated by $D_h$ over $\f_q$. It was proved in \cite{Heng2023} that $\mathcal{C}_{D_h}$ is an extended primitive cyclic code. The parameters of $\mathcal{C}_{D_h}$ and the related designs were also studied in \cite{Heng2023}. It was shown that the Hamming weight of any nonzero codeword in $\mathcal{C}_{D_h}$ belongs to the set $\{q-p^h,q-p^{h-1},\cdots, q-p, q-1, q\}$. In the following, we determine the weight distribution of $\mathcal{C}_{D_h}$.

\begin{theorem}\label{TH3.1}Keep the above notation. 
	Then $\mathcal{C}_{D_h}$ is a $[q, h+2, q-p^h]$ code with 
	\[A_{q-p^j}=\binom{m}{j}_pp^{m-j}\sum_{i=0}^{h-j}(-1)^ip^{\frac{i(i-1)}{2}}\binom{m-j}{i}_p(q^{h+1-j-i}-1)\]
for $0\leq j \leq h$ and $A_q=q^{h+2}-1-\sum_{j=0}^hA_{q-p^j}$.
\end{theorem}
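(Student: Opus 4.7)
The plan is to identify each codeword of $\mathcal{C}_{D_h}$ with the evaluation vector of a polynomial of the form $f_{c,\vec{a}}$ on $\f_q$, and then count the pairs $(c,\vec{a})$ that yield each prescribed weight. For $(c,a_0,\ldots,a_h)\in\f_q^{h+2}$ the corresponding codeword reads $\bigl(f_{c,\vec{a}}(\alpha^{1}),\ldots,f_{c,\vec{a}}(\alpha^{q-1}),c\bigr)$; since $f_{c,\vec{a}}(0)=c$, this is simply the evaluation of $f_{c,\vec{a}}$ at all elements of $\f_q$, so its Hamming weight equals $q-|\N(f_{c,\vec{a}})|$. A short degree argument shows the $h+2$ rows of $D_h$ are $\f_q$-linearly independent (any dependence yields a linearized polynomial of degree $\le p^h<p^m=q$ with all $q$ elements of $\f_q$ as roots), so $\dim\mathcal{C}_{D_h}=h+2$. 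It therefore suffices to count, for each $j\in\{0,1,\ldots,h\}$, the nonzero pairs $(c,\vec{a})$ with $|\N(f_{c,\vec{a}})|=p^j$.

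By the Preliminaries, $\N(f_{c,\vec{a}})$ is either empty or a coset of $\N(f_{0,\vec{a}})$. Writing $L_{\vec{a}}(x)=\sum_{i=0}^h a_ix^{p^i}$, when $\dim_{\f_p}\N(L_{\vec{a}})=j$ the image of $L_{\vec{a}}$ has $p^{m-j}$ elements, and exactly those $c$ with $-c\in\mathrm{Im}(L_{\vec{a}})$ give $|\N(f_{c,\vec{a}})|=p^j$. Consequently
\[A_{q-p^j}=p^{m-j}\cdot N_j,\qquad N_j:=\bigl|\{\vec{a}\in\f_q^{h+1}:\dim_{\f_p}\N(L_{\vec{a}})=j\}\bigr|.\]

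To evaluate $N_j$ I apply Möbius inversion on the subspace lattice $L_m$. For a subspace $V\le\f_q$ of dimension $k$, set
\[G(V)=|\{\vec{a}\in\f_q^{h+1}:V\le\N(L_{\vec{a}})\}|,\qquad F(V)=|\{\vec{a}:\N(L_{\vec{a}})=V\}|,\]
so that $G(V)=\sum_{W\ge V}F(W)$. The condition $V\le\N(L_{\vec{a}})$ amounts to a system of $k$ $\f_q$-linear equations on $\vec{a}\in\f_q^{h+1}$ whose coefficient matrix has entries $v_r^{p^i}$; since a basis of $V$ is $\f_p$-linearly independent, this Moore-type matrix has rank $\min(k,h+1)$, yielding $G(V)=q^{h+1-k}$ for $k\le h+1$ and $G(V)=1$ for $k>h+1$. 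In particular $G(V)$ depends only on $\dim V$, and the upward form of Lemma \ref{Lemma2.1} (equivalently the explicit Möbius function $\mu(U,V)=(-1)^{\dim V-\dim U}p^{\binom{\dim V-\dim U}{2}}$ on $L_m$, obtained from Lemma \ref{Lemma2.1} by dualizing via orthogonal complements) gives
\[F(U)=\sum_{V\ge U}(-1)^{\dim V-\dim U}p^{\binom{\dim V-\dim U}{2}}G(V).\]

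The main obstacle will be to collapse this sum, which a priori runs over all $V\supseteq U$, into one whose highest term corresponds to $\dim V=h$. The key ingredient is the $p$-analog of the alternating binomial identity, $\sum_{i=0}^{n}(-1)^ip^{\binom{i}{2}}\binom{n}{i}_p=0$ for $n\ge 1$, equivalently $\sum_{V\ge U}\mu(U,V)=0$ whenever $U\ne\f_q$. Since $G(V)=1$ for $\dim V>h$, this lets me rewrite the tail of the sum as the negative of its head and arrive, for $\dim U=j$, at
\[F(U)=\sum_{i=0}^{h-j}(-1)^ip^{\binom{i}{2}}\binom{m-j}{i}_p\bigl(q^{h+1-j-i}-1\bigr).\]
Because $F(U)$ depends only on $j$ and $L_m$ has $\binom{m}{j}_p$ subspaces of dimension $j$, multiplying by $\binom{m}{j}_p p^{m-j}$ delivers the asserted formula for $A_{q-p^j}$. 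Finally, $A_q=q^{h+2}-1-\sum_{j=0}^{h}A_{q-p^j}$ follows by total count of codewords, since the weights $q-p^j$ $(0\le j\le h)$ together with $q$ exhaust the nonzero weight spectrum.
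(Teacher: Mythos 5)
Your proposal is correct and follows essentially the same route as the paper: identify codewords with the polynomials $f_{c,\vec{a}}$, reduce to counting $\vec{a}$ by the dimension of $\N(f_{0,\vec{a}})$, compute $G(V)$ via the rank of the Moore matrix, invert on the subspace lattice using the (dualized) $p$-analog M\"{o}bius formula, and truncate the sum with the identity $\sum_{i}(-1)^ip^{\binom{i}{2}}\binom{n}{i}_p=0$. The only differences are cosmetic bookkeeping (you count all $\vec{a}$ with $\dim\N(L_{\vec{a}})=j$ at once rather than fixing a subspace $U$ and multiplying by $\binom{m}{j}_p$, and you add an explicit argument for $\dim\mathcal{C}_{D_h}=h+2$, which the paper takes from the earlier reference).
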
 
\begin{proof}By the construction of $D_h$, we know that each codeword of  $\mathcal{C}_{D_h}$ corresponds to a polynomial $f_{c,\vec{a}}$ over $\f_q$, namely,
	\[{\mathbf c}_{f_{c,\vec{a}}}=(f_{c,\vec{a}}(\alpha_1),f_{c,\vec{a}}(\alpha_2),\cdots,f_{c,\vec{a}}(\alpha_{q-1}),f_{c,\vec{a}}(0)),\]
	where $f_{c,\vec{a}}$ is defined in (\ref{f}).
	We consider $\f_q$ as an $m$-dimensional vector space over $\f_p$. For $0\leq j \leq h$, the Hamming weight of ${\mathbf c}_{f_{c,\vec{a}}}$ is $q-p^{j}$ if and only if $\N(f_{c,\vec{a}})$ is a $j$-dimensional subspace of $\f_q$ or its coset.
	For any fixed $j$-dimensional $\f_p$-subspace $U$, define	
		\[\F(U)=\{\vec{a}\in\f_q^{h+1}|\N(f_{0,\vec{a}})=U\}.\]
It is easy to see that
\begin{equation}
\label{Aq-pj}
A_{q-p^j}=\binom{m}{j}_pp^{m-j}|\F(U)|,\end{equation}
where $\binom{m}{j}_p$ and $p^{m-j}$ are the number of $j$-dimensional $\f_p$-subspaces of $\f_q$ and the number of cosets of $U$, respectively. Similarly, let
	\[\G(U)=\{\vec{a}\in\f_q^{h+1}|\N(f_{0,\vec{a}})\geq U\}.\]
The cardinality of $\G(U)$ can be determined as follows. Let $\{x_1,x_2,\cdots,x_j\}$ be an $\f_p$-basis of $U$. If $\vec{a}=(a_0,a_1,\cdots,a_h)\in\G(U)$, then the following equation holds.
\begin{equation}\label{matrix2}
	\left[\begin{array}{ccccc}
	x_1 & x_1^p & x_1^{p^2} & \cdots & x_1^{p^h} \\
	x_2 & x_2^p & x_2^{p^2} & \cdots & x_2^{p^h} \\
	\vdots & \vdots & \vdots & \ddots& \vdots \\
	x_j & x_j^p & x_j^{p^2} & \cdots & x_j^{p^h}
\end{array}\right]\left[\begin{array}{c}
	a_0 \\
	a_1 \\
	\vdots \\
	a_h
\end{array}\right]=\left[\begin{array}{c}
	0 \\
	0 \\
	\vdots \\
	0
\end{array}\right].\end{equation}
By Corollary 2.38 in \cite{Lidl1997}, since $x_1,x_2,\cdots,x_j$ are $\f_p$-linearly independent, the coefficient matrix is of full row rank. Then (\ref{matrix2}) has $q^{h+1-j}$ solutions when $0\leq j \leq h$, and has the unique solution $\vec{a}=(0,0,\cdots,0)$ when $j>h$. We have
\begin{equation}
	\label{GU}
|\G(U)|=\begin{cases}q^{h+1-j},& \text { for } 0 \leq j \leq h,\\ 1,& \text { for } j>h.\end{cases}\end{equation}
Moreover, note that $\G(U)=\cup_{V\geq U}\F(V)$ and $\F(V_1)\cap \F(V_2)=\emptyset$ if $V_1\neq V_2$, then
 	\[|\G(U)|=\sum_{V\geq U}|\F(V)|,\]
 or	equivalently, we have 
 	\begin{equation}\label{GU2}	|\G(U)|=\sum_{V^\perp\leq U^\perp}|\F(V)|.\end{equation}
 	Herein and hereafter, $V^{\perp}$ and $U^{\perp}$ are the dual subspaces of $V$ and $U$, respectively.
  	Let $\G'(U^{\perp})=\G(U)$ and $\F'(U^{\perp})=\F(U)$. Then $|\G'(\cdot)|$ and  $|\F'(\cdot)|$ are both functions defined on $L_m$, which is the posets of all $\f_p$-subspaces of $\f_q$. From (\ref{GU2}), we have
  	\begin{equation}\label{GUCZ}
  		|\G'(U^\perp)|=\sum_{V^\perp\leq U^\perp}|\F'(V^\perp)|.
  	\end{equation}
 
 		By (\ref{GUCZ}) and Lemma \ref{Lemma2.1}, we obtain
 		\[|\F'(U^\perp)|=\sum_{V^\perp\leq U^\perp}(-1)^{\mathrm{dim} U^\perp-\mathrm{dim} V^\perp}p^{\binom{\mathrm{dim} U^\perp-\mathrm{dim} V^\perp}{2}}|\G'(V^\perp)|,\]
 i.e.,
 	\[|\F(U)|=\sum_{V\geq U}(-1)^{\mathrm{dim} V-\mathrm{dim} U}p^{\binom{\mathrm{dim} V-\mathrm{dim} U}{2}}|\G(V)|.\]
This with $\sum_{V\geq U}(-1)^{\mathrm{dim} V-\mathrm{dim} U}p^{\binom{\mathrm{dim} V-\mathrm{dim} U}{2}}=0$ leads to 	
\begin{equation}\label{FU}
	|\F(U)|=\sum_{V\geq U}(-1)^{\mathrm{dim} V-\mathrm{dim} U}p^{\binom{\mathrm{dim} V-\mathrm{dim} U}{2}}(|\G(V)|-1).
\end{equation}
		
In the following, we simplify (\ref{FU}). Let $\mathrm{dim} V-\mathrm{dim} U=i$, then $\mathrm{dim} V=i+j$, $0\leq i \leq m-j$. The number of such $V$'s with dimension $i$ is $\binom{m-j}{(i+j)-j}_p=\binom{m-j}{i}_p$. By (\ref{GU}), $|\G(V)|-1=q^{h+1-j-i}-1$ if $i\leq h-j$ and $|\G(V)|-1=0$ otherwise. Then we have	
 \begin{equation}\label{FUJH}
 	|\F(U)|=\sum_{i=0}^{h-j}(-1)^ip^{\frac{i(i-1)}{2}}\binom{m-j}{i}_p(q^{h+1-j-i}-1).\end{equation}
 The desired result follows from (\ref{Aq-pj}) and (\ref{FUJH}). We finish the proof.
\end{proof}

\section{Parameters of designs from $\mathcal{C}_{D_h}$}
\label{sec4}
It was proved  in \cite{Heng2023} that the supports of codewords of any fixed nonzero weight in $\mathcal{C}_{D_h}$ form a $2$-design. More precisely, for a codeword ${\mathbf c}_{f_{c,\vec{a}}}\in\mathcal{C}_{D_h}$, if $\mathrm{wt}({\mathbf c}_{f_{c,\vec{a}}})=q-p^j$ for some $0\leq j \leq h$, then $\mathrm{supp}({\mathbf c}_{f_{c,\vec{a}}})$ is a subset of $\{1,2,\cdots,q\}$ with $q-p^j$ elements. Consequently, $|\mathrm{supp}^c({\mathbf c}_{f_{c,\vec{a}}})|=p^j$, where $\mathrm{supp}^c({\mathbf c}_{f_{c,\vec{a}}})$ denotes the complement set of $\mathrm{supp}({\mathbf c}_{f_{c,\vec{a}}})$ in $\{1,2,\cdots,q\}$. The subset $\mathrm{supp}^c({\mathbf c}_{f_{c,\vec{a}}})$ corresponds to $\N(f_{c,\vec{a}})$, which is an $j$-dimensional subspace of $\f_q$ or one of its cosets. 
For any fixed $j$-dimensional subspace $U$ (respectively, any coset of $U$), Theorem \ref{TH3.1} shows that there are $\sum_{i=0}^{h-j}(-1)^ip^{\frac{i(i-1)}{2}}\binom{m-j}{i}_p(q^{h+1-j-i}-1)$ codewords of $\mathcal{C}_{D_h}$, whose supports are the same. This with the value of $A_{q-p^j}$ leads to
\[|\mathcal{B}_{q-p^j}|=\binom{m}{j}_pp^{m-j}.\]
Since the the pair $(\mathcal{P},\mathcal{B}_{q-p^j})$ forms a $2$-design, we immediately deduce the following theorem by (\ref{eqn-1}).

\begin{theorem}
	For $0\leq j \leq h$, the pair $(\mathcal{P},\mathcal{B}_{q-p^j})$ is a simple $2-(q,q-p^j,\lambda_j)$ design , where 
	\[\lambda_j=(q-p^j-1)\binom{m-1}{j}_p.\]
\end{theorem}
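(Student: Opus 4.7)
The plan is to apply the counting identity (\ref{eqn-1}) directly, since all the needed ingredients are already in hand. The paper already establishes two facts just before the theorem statement: first, by the work of Heng et al., the pair $(\mathcal{P},\mathcal{B}_{q-p^j})$ is known to be a $2$-design, so I only need to determine the parameter $\lambda_j$; second, combining Theorem \ref{TH3.1} with the observation that all codewords whose support is a fixed $j$-dimensional subspace (or a fixed coset thereof) contribute the same block yields the block count $|\mathcal{B}_{q-p^j}|=\binom{m}{j}_pp^{m-j}$. Therefore I would substitute $n=q$, $k=q-p^j$, $t=2$ into (\ref{eqn-1}) and solve for $\lambda_j$, obtaining
\[\lambda_j=\frac{\binom{q-p^j}{2}}{\binom{q}{2}}\,\binom{m}{j}_p\,p^{m-j}=\frac{(q-p^j)(q-p^j-1)\binom{m}{j}_p\,p^{m-j}}{q(q-1)}.\]

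Next I would simplify this expression to match the claimed formula. The two useful identities are $q-p^j=p^j(p^{m-j}-1)$ (so $q=p^m$) and the Gaussian binomial identity $\binom{m}{j}_p=\frac{p^m-1}{p^{m-j}-1}\binom{m-1}{j}_p=\frac{q-1}{p^{m-j}-1}\binom{m-1}{j}_p$. Substituting these, the factor $(q-p^j)\cdot p^{m-j}\cdot\binom{m}{j}_p$ in the numerator becomes $p^j(p^{m-j}-1)\cdot p^{m-j}\cdot\frac{q-1}{p^{m-j}-1}\binom{m-1}{j}_p=q(q-1)\binom{m-1}{j}_p$, so the $q(q-1)$ denominator cancels and we are left with $\lambda_j=(q-p^j-1)\binom{m-1}{j}_p$, as claimed.

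The only remaining point is simplicity of the design, which is essentially automatic: by definition $\mathcal{B}_{q-p^j}$ is a set of supports, and from the proof of Theorem \ref{TH3.1} two codewords share the same support precisely when their associated polynomials have the same zero set, so distinct elements of $\mathcal{B}_{q-p^j}$ correspond to distinct affine $\f_p$-subspaces of $\f_q$. There is no genuine obstacle here; the work is purely algebraic and the one mildly delicate step is keeping the Gaussian binomial identity for $\binom{m}{j}_p/\binom{m-1}{j}_p$ straight so that the denominator $q(q-1)$ cleanly cancels.
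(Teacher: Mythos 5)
Your proposal is correct and follows essentially the same route as the paper: cite the $2$-design property from Heng et al., derive $|\mathcal{B}_{q-p^j}|=\binom{m}{j}_pp^{m-j}$ from Theorem \ref{TH3.1}, and apply (\ref{eqn-1}); your explicit algebraic simplification via $\binom{m}{j}_p=\frac{q-1}{p^{m-j}-1}\binom{m-1}{j}_p$ is a valid verification of the stated $\lambda_j$ that the paper leaves implicit.
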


Moreover, it was conjectured that the supports of minimum weight codewords of $\mathcal{C}_{D_h}$ hold $3$-designs when $p=2$. We confirm this conjecture in the following theorem.

\begin{theorem}
	Let $p=2$ and $h\geq 2$. The pair $(\mathcal{P},\mathcal{B}_{q-p^h})$ is a simple $3-(q,q-p^h,\lambda)$ design, where 
	\[\lambda=\frac{\binom{m}{h}_22^{m-h}\binom{q-2^h}{3}}{\binom{q}{3}}.\]
\end{theorem}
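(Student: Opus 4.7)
The plan is to exploit the explicit description of the blocks given by Theorem~\ref{TH3.1}. For $p=2$ and $j=h$, each block in $\mathcal{B}_{q-2^h}$ is the complement inside $\mathcal{P}\equiv\f_q$ of a uniquely determined $h$-dimensional $\f_2$-affine subspace of $\f_q$ (i.e., an $h$-dimensional linear subspace or one of its cosets), and every such affine flat arises this way. Hence a $3$-subset $T\subseteq\f_q$ is contained in a block of $\mathcal{B}_{q-2^h}$ if and only if the corresponding $h$-flat is disjoint from $T$. The theorem thus reduces to showing that the number of such disjoint $h$-flats is independent of the choice of $T$.

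The argument I would run is inclusion–exclusion:
\[
\#\{A:A\cap T=\emptyset\}=\sum_{S\subseteq T}(-1)^{|S|}N(S),
\]
where $N(S)$ counts $h$-dimensional $\f_2$-affine subspaces $A\leq\f_q$ with $S\subseteq A$. For $|S|\in\{0,1,2\}$ the counts are standard and patently independent of which points sit in $S$: one has $N(\emptyset)=\binom{m}{h}_2 2^{m-h}$ (the total number of such flats, as derived in the paragraph after Theorem~\ref{TH3.1}); $N(\{a\})=\binom{m}{h}_2$, since $h$-flats through $a$ are in bijection with $h$-dimensional $\f_2$-subspaces via $V\mapsto V+a$; and $N(\{a,b\})=\binom{m-1}{h-1}_2$, since $V+a\ni b$ iff $V\supseteq\langle a+b\rangle$, a fixed one-dimensional $\f_2$-subspace.

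The crucial and only nontrivial step is the case $|S|=3$, and this is where the hypothesis $p=2$ is used in an essential way. For distinct $a,b,c\in\f_q$, one has $V+a\supseteq\{a,b,c\}$ iff $V$ contains both $a+b$ and $a+c$. Over $\f_2$ the elements $a+b$ and $a+c$ are nonzero (since $a\neq b$, $a\neq c$) and moreover $\f_2$-linearly independent, for otherwise $a+b=a+c$, forcing $b=c$. Therefore $\langle a+b,a+c\rangle$ is \emph{always} a 2-dimensional $\f_2$-subspace, regardless of the specific choice of $a,b,c$, whence $N(\{a,b,c\})=\binom{m-2}{h-2}_2$ (which is where $h\geq 2$ is needed). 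Combining the four counts, the alternating sum is a constant depending only on $m,h$, which establishes the $3$-design property. The value of $\lambda$ is then forced by (\ref{eqn-1}) applied with $n=q$, $k=q-2^h$, $t=3$ and $|\mathcal{B}_{q-2^h}|=\binom{m}{h}_2 2^{m-h}$.

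The main obstacle is conceptual rather than computational: one must verify that three distinct points of $\f_q$ in characteristic $2$ are automatically in general $\f_2$-affine position, so that the 2-dimensional affine span $\{a,b,c,a+b+c\}$ has the same dimension for every triple. This fails over $\f_p$ with $p$ odd (three $\f_p$-collinear points span only a 1-dimensional affine flat), which explains both why the conjecture is restricted to $p=2$ and why the $|S|=3$ case is the sole nontrivial input to the proof.
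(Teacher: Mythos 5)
Your proposal is correct and follows essentially the same route as the paper: both arguments reduce to showing that every $3$-subset of $\f_q$ lies in exactly $\binom{m-2}{h-2}_2$ of the $h$-dimensional $\f_2$-flats $\N(f_{c,\vec{a}})$, which is precisely the paper's Case I/Case II computation. The differences are cosmetic --- you replace the paper's two-case analysis of $\dim\langle x_1,x_2,x_3\rangle$ with the single observation that $a+b$ and $a+c$ are automatically $\f_2$-independent for distinct $a,b,c$, and you re-derive the complementary-design step by inclusion--exclusion rather than citing it.
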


\begin{proof}
By Theorem \ref{TH3.1}, we know that $A_{q-2^h}=\binom{m}{h}_22^{m-h}(q-1)$ and 
\[|\mathcal{B}_{q-2^h}|=\frac{1}{q-1}A_{q-2^h}=\binom{m}{h}_22^{m-h}.\]
Consider another block set $\mathcal{B}^c_{q-2^h}=\{\mathrm{supp}^c({\mathbf c})|{\mathbf c}\in\mathcal{C}_{D_h}, \mathrm{wt}({\mathbf c})=q-2^h\}$. Each block in $\mathcal{B}^c_{q-2^h}$ corresponds to a subset of $\f_q$, namely $\N(f_{c,\vec{a}})$, for some $f_{c,\vec{a}}$ with $\mathrm{wt}(f_{c,\vec{a}})=q-2^h$.
We first prove that the pair $(\mathcal{P},\mathcal{B}^c_{q-2^h})$ is a $3$-design. 
For any subset $\{i_1,i_2,i_3\}\subseteq\{1,2,\cdots,q\}$, we know that $\{i_1,i_2,i_3\}$ corresponds to $\{x_{1},x_{2},x_{3}\}$, which is a subset of $\f_q$ with three elements. Let  $<x_1,x_2,x_3>$ be the $\f_2$-subspace generated by $x_1,x_2$, and $x_3$. Note that $\mathrm{dim}<x_1,x_2,x_3>$ is $3$ or $2$ since $p=2$. In the following, we consider the number of $f_{c,\vec{a}}$ such that $\{x_1,x_2,x_3\}\subseteq \N(f_{c,\vec{a}})$.
We discuss in the following two cases.

Case I. $\mathrm{dim}<x_1,x_2,x_3>=3$. It means that $x_1$, $x_2$ and $x_3$ are $\f_2$-linearly independent. Thus, $x_1+x_3$ and $x_2+x_3$ are also  $\f_2$-linearly independent. If $\{x_1,x_2,x_3\}\subseteq \N(f_{c,\vec{a}})$, then $\{x_1+x_3, x_2+x_3\}\subseteq \N(f_{0,\vec{a}})$, hence
 $<x_1+x_3,x_2+x_3>\leq \N(f_{0,\vec{a}})$. This implies that $\N(f_{0,\vec{a}})$ is an $h$-dimensional subspace of $\f_q$, which contains a fixed $2$-dimensional $\f_2$-subspace. Consequently, the number of such $\N(f_{0,\vec{a}})$ is $\binom{m-2}{h-2}_2$. Since $\{x_1,x_2,x_3\}\subseteq \N(f_{c,\vec{a}})$, it remains to make sure that $x_3\in \N(f_{c,\vec{a}})$. Note that there is exact one $c$, namely $c=f_{0,\vec{a}}(x_3)$, such that $x_3\in \N(f_{c,\vec{a}})$. We conclude that there are $\binom{m-2}{h-2}_2$ $f_{c,\vec{a}}$'s such that $\{x_1,x_2,x_3\}\subseteq \N(f_{c,\vec{a}})$ in this case.

Case II.  $\mathrm{dim}<x_1,x_2,x_3>=2$. Without loss of generality, we assume that $x_1$ and $x_2$ are $\f_2$-linearly independent, and then $x_3=0$ or $x_3=x_1+x_2$. Thus, we always have $\{x_1+x_3,x_2+x_3\}=\{x_1,x_2\}$. If $\{x_1,x_2,x_3\}\subseteq \N(f_{c,\vec{a}})$, then $<x_1, x_2>\leq \N(f_{0,\vec{a}})$. The number of such $\N(f_{0,\vec{a}})$ is $\binom{m-2}{h-2}_2$. It remains to consider $x_3\in \N(f_{c,\vec{a}})$. Similarly, there is exact one $c$, namely $c=f_{0,\vec{a}}(x_3)$, such that $x_3\in \N(f_{c,\vec{a}})$. We obtain the same number with that in Case I.

The discussion above shows that the pair $(\mathcal{P}, \mathcal{B}^c_{q-2^h})$ is a $3$-design with parameters $3-(q,2^h,\binom{m-2}{h-2}_2)$. Consequently, $(\mathcal{P}, \mathcal{B}_{q-2^h})$ is a $3$-design as well. By (\ref{eqn-1}), $(\mathcal{P}, \mathcal{B}_{q-2^h})$ is a $3$-$(q,q-2^h,\lambda)$ simple design, where 
\[\lambda=\frac{\binom{m}{h}_22^{m-h}\binom{q-2^h}{3}}{\binom{q}{3}}.\]
The proof is completed.
\end{proof}
\section{Conclusion}
\label{sec5}
In this paper, we studied a family of extended primitive cyclic codes $\C_{D_h}$ and the related $t$-designs. Based on the construction of $\C_{D_h}$, its weight distribution was determined by using the $p$-analog M\"{o}bius inverse formula. Moreover, the parameters of the the related $t$-designs were obtained by the definition of combinatorial designs, which solved a conjecture proposed by Heng et al. in \cite{Heng2023}. We mention that our approach can be used to study other problems involving linearized polynomials.

\end{document}